\begin{document}
\title{\vspace{-1cm}Multi-Pair D2D Communications Aided by An Active RIS over Spatially Correlated Channels with Phase Noise}
\author{Zhangjie~Peng,
        Xue~Liu,
        Cunhua~Pan,~\IEEEmembership{Member,~IEEE},
        Li~Li, and
        Jiangzhou~Wang,~\IEEEmembership{Fellow,~IEEE}\vspace{-0.8cm}
\thanks{This work was supported in part by the Natural Science Foundation of Shanghai under Grant 22ZR1445600, in part by the open research fund of National Mobile Communications Research Laboratory, Southeast University under Grant 2018D14, and in part by the National Natural Science Foundation of China under Grant 61701307. \emph{(Corresponding authors: Cunhua Pan and Xue Liu.)}}
\thanks{Zhangjie Peng is with the College of Information, Mechanical and Electrical Engineering, Shanghai Normal University, Shanghai 200234, China, also with the National Mobile Communications Research Laboratory, Southeast University, Nanjing 210096, China, and also with the Shanghai Engineering Research Center of Intelligent Education and Bigdata, Shanghai Normal University, Shanghai 200234, China (e-mail: pengzhangjie@shnu.edu.cn).}
\thanks{X. Liu and L. Li are with the College of Information, Mechanical and Electrical Engineering, Shanghai Normal University, Shanghai 200234, China (e-mail: 1000494962@smail.shnu.edu.cn; lilyxuan@shnu.edu.cn).}
\thanks{C. Pan is with the National Mobile Communications Research Laboratory,
Southeast University, Nanjing 210096, China. (e-mail: cpan@seu.edu.cn).}
\thanks{J. Wang is with the School of Engineering, University of Kent, CT2 7NT Canterbury, U.K. (e-mail: j.z.wang@kent.ac.uk).}
}
\maketitle
\newtheorem{lemma}{Lemma}
\newtheorem{theorem}{Theorem}
\newtheorem{remark}{Remark}
\newtheorem{corollary}{Corollary}
\newtheorem{proposition}{Proposition}\vspace{-0.3cm}
\begin{abstract}
This paper investigates a multi-pair device-to-device (D2D) communication system aided by an active reconfigurable intelligent surface (RIS) with phase noise and direct link. The approximate closed-form expression of the ergodic sum rate is derived over spatially correlated Rician fading channels with statistical channel state information (CSI). When the Rician factors go to infinity, the asymptotic expressions of the ergodic sum rates are presented to give insights in poor scattering environment. The power scaling law for the special case of a single D2D pair is presented without phase noise under uncorrelated Rician fading condition. Then, to solve the ergodic sum rate maximization problem, a method based on genetic algorithm (GA) is proposed for joint power control and discrete phase shifts optimization. Simulation results verify the accuracy of our derivations, and also show that the active RIS outperforms the passive RIS.
\end{abstract}
\begin{IEEEkeywords}
Reconfigurable intelligent surface (RIS), active RIS, device-to-device (D2D) communication, ergodic sum rate, phase noise, spatial correlation, power control.\vspace{-0.4cm}
\end{IEEEkeywords}
\section{Introduction}
Recently, reconfigurable intelligent surface (RIS) has emerged as a brand new communication paradigm to reconfigure the radio propagation environment in a desired manner \cite{1}.
To be specific, RIS is an array of reflecting elements, each of which can independently induce a phase shift on the incident signals. By carefully tuning the phase shifts, RIS can be utilized for reducing the sum power \cite{2} and enhancing the system sum rate \cite{3}. The RIS also possesses the advantages of small size, light weight, convenient deployment and low cost.

On the other hand, device-to-device (D2D) technology is also regarded as a promising solution to alleviate the capacity demand of local transmission. The integration of RIS in D2D communications has sparkled extensive research efforts \cite{4,5}. In \cite{4}, the power allocation scheme was presented and an RIS was deployed to improve the system sum rate by eliminating the interference of D2D communications. In \cite{5}, the energy efficiency was maximized through jointly optimizing the power allocation and the phase shifts in D2D communication network.

However, the RIS considered in \cite{1,2,3,4,5} is passive, which can only reflect the incident signal without any signal processing operations. The performance gain brought by the passive RIS is limited due to the “multiplicative fading” effect with strong direct links between the base station and the users \cite{6}. To address this issue, active RIS has been proposed, which can amplify the incident signals in the electromagnetic level and adjust the phase shifts, concurrently \cite{15}. Different from amplify-and-forward relays, the active RIS operates in full-duplex mode and is equipped with low-power reflection-type amplifiers instead of power-hungry radio frequency chains \cite{7}. The active RIS inherits the advantages of the passive RIS, while its superiorities have been demonstrated in \cite{8}. However, the contributions in \cite{6,15,7,8} were based on the assumption of the availability of instantaneous channel state information (CSI), which results in high channel estimation overhead. To tackle this issue, the authors in \cite{9,10,11,12} designed the RIS phase shifts exploiting statistical CSI, which varies much slower and can relax the necessity for configuring the RIS frequently. Besides, the impact of spatial correlation is non-negligible since the reflecting elements are located close to each other due to the small size of the RIS. The performance of RIS-aided cell-free massive multiple-input multiple-output (MIMO) system was analyzed under the presence of spatial correlation \cite{9}. The ergodic sum rate was studied and maximized in the RIS-aided MIMO multiple-access channel system over spatially correlated Rician fading with statistical CSI \cite{10}. The outage probability in RIS-aided communication system was investigated in \cite{11} and the impact of the Von Mises phase errors of the reflecting elements was further studied in \cite{12}.

In this paper, we investigate a multi-pair D2D communication system aided by an active RIS over spatially correlated Rician fading channels, where only statistical CSI is available to reduce the channel estimation overhead. It is noted that our work differs from \cite{18} where the RIS was used as a receiver. The contributions of this paper are summarized as follows: 1) The approximate closed-form expression of the ergodic sum rate is derived; 2) We propose a method based on genetic algorithm (GA) to maximize the ergodic sum rate through joint power control and discrete phase shifts optimization; 3) Simulation results validate the accuracy of our derivations and the superiority of the active RIS over the passive RIS.\vspace{-0.2cm}
\section{System Model}\vspace{-0.1cm}
Consider a D2D overlaying communication system as depicted in Fig. \ref{fig1}, which serves $K$ pairs of single antenna users. The base station acts as a controller, which not only allocates the power and the spectrum resources for the D2D user-pairs, collects all the statistical CSI estimated by the devices but also performs the power control and the optimization of RIS phase shifts. The active RIS with $N=N_HN_V$ reflecting elements is deployed to further enhance the communication performance. The phase shift matrix is expressed as $\bm{\Theta}={\rm{diag}}(e^{j\theta_1},\ldots,e^{j\theta_n},\ldots,e^{j\theta_N})\in {{\mathbb C}^{N \times N}}$, where $n=1,\cdots,N$. The phase noise is considered due to the hardware limit of the active RIS. The phase noise matrix is denoted as $\bm{\Phi}={\rm{diag}}(e^{j\tilde \theta_1},\ldots,e^{j\tilde \theta_n},\ldots,e^{j\tilde \theta_N})$ and $\tilde\theta_n$ follows the circular normal distribution with zero mean and concentration parameter $\kappa_{\tilde \theta}$ \cite{13}. The amplification factor matrix is ${\bm{\Lambda }} = {\rm{diag}}\left( { {{\eta _1}} ,\ldots,\eta_n,\ldots,{{\eta _N}} } \right)$. The amplification noise is ${{\bf{n}}_F}\sim{\cal C}{\cal N}\left( {{{\bf{0}}},\sigma _F^2{{\bf{I}}}} \right)$ and ${{\bf{n}}_F}\in {{\mathbb C}^{N \times 1}}$.
\begin{figure}[t]\vspace{-0.8cm}
\centering
\includegraphics[scale=0.25]{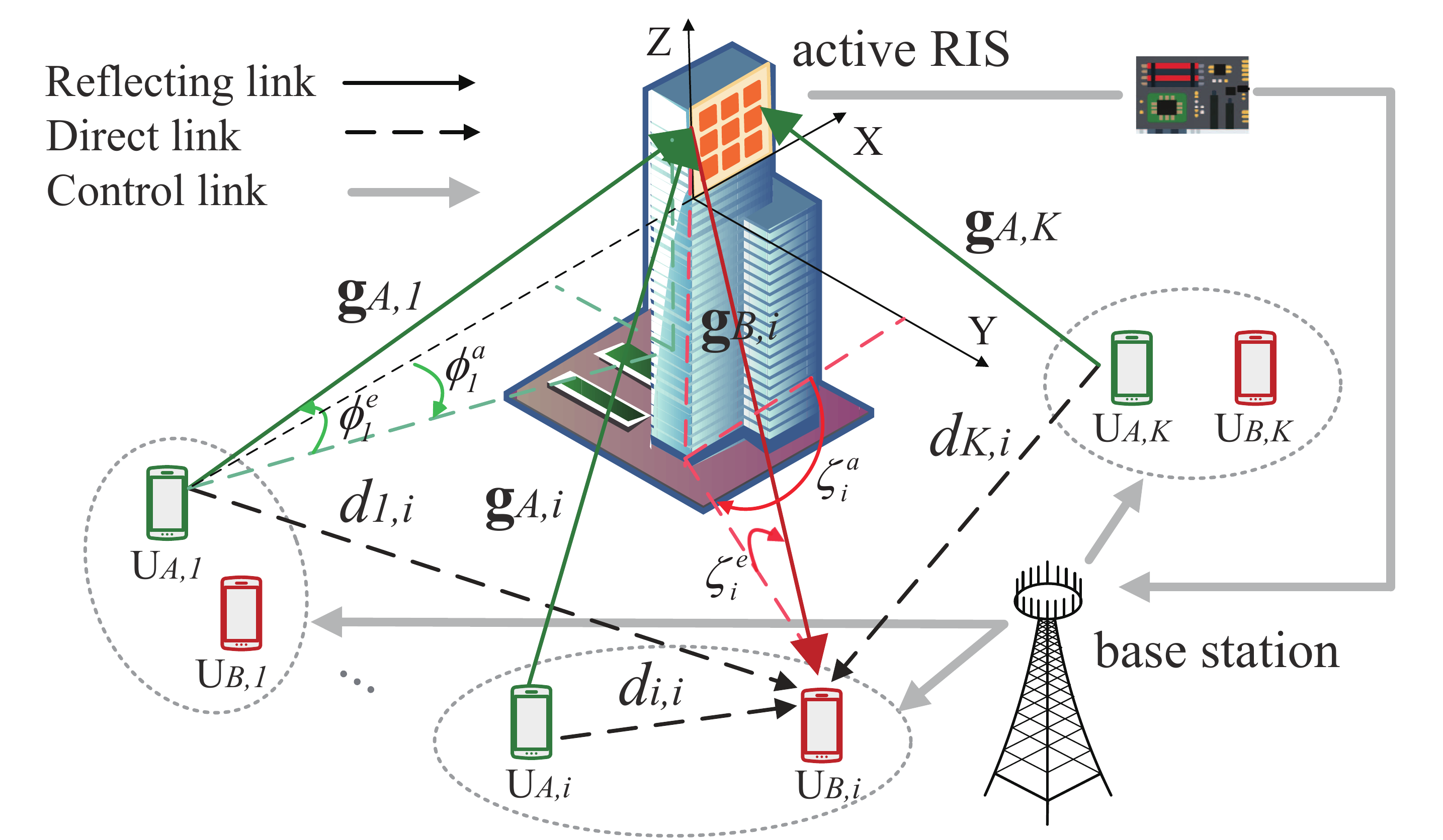}\vspace{-0.4cm}
\caption{System model of multi-pair D2D communication aided by an active RIS.} \label{fig1}\vspace{-0.7cm}
\end{figure}

We denote the $i$-th D2D user-pair as $\text U_{A,i}$ and $\text U_{B,i}$ for $i = 1, \cdots, K$. The transmit power of $\text U_{A,i}$ is $P_i$, and ${s}_{A,i}$ stands for the information symbol with unit power.

The direct channel link between $\text U_{A,i}$ and $\text U_{B,j}$ is denoted as $h_{i,j}$, which is assumed to follow the Rician fading in the short communication distance, i.e., $h_{i,j}\sim{\cal C}{\cal N}\left({{\bar h}_{i,j}},{\sigma_{i,j}^2}/({1+\gamma_{i,j}}) \right)$ where ${\bar h}_{i,j}={\sigma_{i,j}}\sqrt{\gamma_{i,j}/({1+\gamma_{i,j}})}$. ${{\bar h}_{i,j}}$ is the line-of-sight (LoS) component, $\sigma_{i,j}^2$ is the large-scale fading coefficient and $\gamma_{i,j}$ is the Rician factor.
The reflecting channels $\text U_{A,i}\to$ RIS and RIS $\to \text U_{B,i}$ are assumed to follow the spatially correlated Rician fading since the active RIS is usually deployed in high altitude with LoS components \cite{14}:
\setlength\abovedisplayskip{2pt}
\setlength\belowdisplayskip{2pt}
\begin{align}\label{gtiimp}
{{\bf{g}}_{A,i}}&  \!=\!\sqrt{{\alpha _i\gamma _{A,i}}/{(1+\gamma _{A,i})}}{{\bf{\bar g}}_{A,i}} \!+\!\sqrt{{\alpha _i}/{(1+\gamma _{A,i})}} {{\bf{\tilde g}}_{A,i}},\\
{{\bf{g}}_{B,i}}& \! =\!\sqrt{{\beta _i\gamma _{B,i}}/{(1+\gamma _{B,i})}}{{\bf{\bar g}}_{B,i}} \!+\!\sqrt{{\beta _i}/{(1+\gamma _{B,i})}} {{\bf{\tilde g}}_{B,i}},
\end{align}
where ${{\bf{g}}_{A,i}}$, ${{\bf{g}}_{B,i}}\in {{\mathbb C}^{N \times 1}}$. $\alpha_{i}$ and $\beta_{i}$ denote the large-scale fading coefficients. $\gamma_{A,i}$ and $\gamma_{B,i}$ are the Rician factors. ${{\bf{\tilde g}}_{A,i}}$ and ${{\bf{\tilde g}}_{B,i}}$ are non-line-of-sight (NLoS) components, which are distributed as ${{\bf{\tilde g}}_{A,i}}\sim{\cal C}{\cal N}\left( {\bf{0}},{\bf{R}} \right)$ and ${{\bf{\tilde g}}_{B,i}}\sim{\cal C}{\cal N}\left( {\bf{0}},{\bf{R}} \right)$, respectively. $\bf{R}$ stands for the normalized spatial correlation matrix under the isotropic scattering model, whose ($p,q$)-th element $[{\bf{R}}]_{p,q}$ is \cite[Eq. (10)]{14}.
\begin{align}\label{R}\hspace{-0.2cm}
r_{p,q}\!=\!{\rm{sinc}}\!\!\left(\!\!{{2{\sqrt {\!{{\left( {h\left( p \right)\!-\! h\left( q \right)} \right)}^2}d_H^2 \!+ \! {{\left({v\left( p \right)\!-\! v\left( q \right)} \right)}^2}d_V^2}}}\!\big/\!{\lambda }} \!\right)\!,
\end{align}
where ${h\left( p \right)}=\text{mod}\left( p-1,N_H \right)$ and ${v\left( p\right)}=\left\lfloor {\left( p-1\right)/N_H} \right\rfloor$. $d_H$ and $d_V$ are the horizontal and vertical element-spacing, $\lambda$ is the wavelength and we set $d_H=d_V=\frac{1}{4}\lambda$.
${{{\bf{\bar g}}}_{A,i}}$ and $ {{{\bf{\bar g}}}_{B,i}}$ represent the LoS components under the uniform planar array (UPA) model, which are respectively written as
\begin{align}
{{{\bf{\bar g}}}_{A,i}} &\!=\!\!{\left[\!\begin{array}{l}\!\!
1, \cdots, {e^{j\frac{{2\pi }}{\lambda }\left( {{h\left( n \right)}{d_H}\!\sin \phi _i^a{\rm{cos}}\phi _i^e + {v\left( n \right)}{d_V}\!\sin \phi _i^e} \right)}}, \ldots ,\!\!\\
\ {e^{j\frac{{2\pi }}{\lambda }\left( {\left( {{N_H} - 1} \right){d_H}\!\sin \phi _i^a\cos \phi _i^e + \left( {{N_V} - 1} \right){d_V}\!\sin \phi _i^e} \right)}}
\end{array} \!\!\right]^T}\!\!\!\!\!,
\end{align}\vspace{-0.1cm}
\begin{align}
{{{\bf{\bar g}}}_{B,i}} &\!= \!\!{\left[\!\begin{array}{l}\!\!
1, \ldots ,{e^{j\frac{{2\pi }}{\lambda }\left( {{h\left( n \right)}{d_H}\!\sin \zeta _i^a\cos \zeta _i^e + {v\left( n \right)}{d_V}\!\sin \zeta _i^e} \right)}}, \ldots ,\!\!\\
\ {e^{j\frac{{2\pi }}{\lambda }\left( {\left( {{N_H} - 1} \right){d_H}\!\sin \zeta _i^a\cos \zeta _i^e + \left( {{N_V} - 1} \right){d_V}\!\sin \zeta _i^e} \right)}}
\end{array}\!\! \right]^T}\!\!\!\!\!,
\end{align}
where $\phi _i^a$ and $\phi _i^e$ denote the azimuth and elevation angles of arrival (AoA), $\zeta _i^a$ and $\zeta _i^e$ denote the azimuth and elevation angles of departure (AoD).
The amplification power of the active RIS is given by
\begin{equation}\label{P_RIS}
\sum\nolimits_{i = 1}^K {{P_i}{\mathbb{E}}\left\{ {{{\left\| {{\bm{\Lambda\Theta\Phi }}{{\bf{g}}_{A,i}}} \right\|}^2}} \right\}}  + {\mathbb{E}}\left\{ {{{\left\| {{\bm{\Lambda\Theta\Phi }}}{{\bf{n}}_F} \right\|}^2}} \right\}= {P_R},
\end{equation}
where the expectation is taken over the NLoS components in the channels. The total power consumption of the active RIS aided system is
\begin{equation}\label{P_active}
{P_T} =\sum\nolimits_{i = 1}^K \!\!{{P_i}}+\varepsilon ^{ - 1}{P_R}+ N({P_{DC}} + {P_{SW}}),
\end{equation}
where $\varepsilon$ is the amplifier efficiency, $P_{DC}$ and $P_{SW}$ represent the direct current biasing power consumption and the power consumption of the switch and control circuit at each reflecting element \cite{15}.
In addition, the signal received at the $j$-th D2D receiver is given by
\begin{equation}\label{ybj}
{y_j} \!\!=\!\!\underbrace{\sqrt {\!{P_{j}}}{g_{j,j}}{s_{A,j}}}_{\text{desired signal}}\!+\!\!\!\underbrace{\sum\nolimits_{i \ne j}^K\!\!\!{\sqrt{\!{P_{i}}}{g_{i,j}}{s_{A,i}}}}_{\text{inter-pair interference}}\!+\underbrace{{\bf{g}}_{B,j}^T{\bm{\Lambda \Theta\Phi}} {{\bf{n}}_F}}_{\text{dynamic noise}} + \!\! \!\!\!\underbrace{{n_j}}_{\text{static noise}}\!\!\!,
\end{equation}
where $g_{i,j}={\bf{g}}_{B,j}^T{\bm{\Lambda \Theta\Phi}}{{\bf{g}}_{A,i}}+h_{i,j}$ and ${n_j}\sim{\cal C}{\cal N}\left( {0,\sigma _{j}^2} \right)$.

The signal-to-interference plus noise ratio (SINR) at $\text U_{B,j}$ is\vspace{-0.1cm}
\begin{equation}\label{SINR_j}
{\gamma _j} = \frac{{{P_{j}}{{\left| g_{j,j} \right|}^2}}}{{
\sum\nolimits_{i \ne j}^K \!{{P_{i}}{{\left| g_{i,j} \right|}^2}}
+ {{\left\| {{\bf{g}}_{B,j}^T\bm{\Lambda \Theta\Phi }}\right\|}^2}\sigma _F^2 + \sigma _j^2}}.\vspace{-0.1cm}
\end{equation}

Then, the ergodic rate is expressed as
\begin{align}\label{monte}
R_j= \mathbb{E}\big\{\text{log}_2(1+{\gamma _j})\big\},\vspace{-0.1cm}
\end{align}
and the ergodic sum rate is $R=\sum_{j=1}^K R_j$.\vspace{-0.4cm}
\section{Ergodic Sum Rate Analysis}
\begin{figure*}[b!]\vspace{-0.4cm}\hrulefill
\begin{align}\setcounter{equation}{15}\label{R_bar}
&{{\bar R}_j} = {\log _2}\left( {1 + \frac{{{P_j}\left( {{\alpha _j}{\beta _j}{\eta ^2}\left( {N+{\Gamma _{j,j}}} \right) + 2\eta \kappa {\sigma _{j,j}}\sqrt {{\alpha _j}{\beta _j}} {\Upsilon _{j,j}} + \sigma _{j,j}^2} \right)}}{{\sum\nolimits_{i \ne j}^K {{P_i}\left( {{\alpha _i}{\beta _j}{\eta ^2}\left( {N+{\Gamma _{i,j}}} \right) + 2\eta \kappa {\sigma _{i,j}}\sqrt {{\alpha _i}{\beta _j}} {\Upsilon _{i,j}} + \sigma _{i,j}^2} \right)}  + {\eta ^2}N{\beta _j}\sigma _F^2 + \sigma _j^2}}} \right),\\
\label{R_bar_pas}
&{{\bar R}_j^{pas}} = {\log _2}\left( {1 + \frac{{{P_j}\left( {{\alpha _j}{\beta _j}\left( {N+{\Gamma _{j,j}}} \right) + 2\kappa {\sigma _{j,j}}\sqrt {{\alpha _j}{\beta _j}} {\Upsilon _{j,j}} + \sigma _{j,j}^2} \right)}}{{\sum\nolimits_{i \ne j}^K {{P_i}\left( {{\alpha _i}{\beta _j}\left( {N+{\Gamma _{i,j}}} \right) + 2\kappa {\sigma _{i,j}}\sqrt {{\alpha _i}{\beta _j}} {\Upsilon _{i,j}} + \sigma _{i,j}^2} \right)}+ \sigma _j^2}}} \right).
\end{align}\vspace{-0.3cm}
\end{figure*}

In this section, the approximate closed-form expression of the ergodic sum rate is derived with statistical CSI, which contains the spatial correlation, the location and angle information of the transceivers. First, we assume that the amplification factor of each element is the same, i.e., ${{\eta _n}}= {{\eta}}$, which is derived below.\vspace{-0.15cm}
\begin{lemma}\label{lemma_eta}
The amplification factor for each element on the active RIS is given by\vspace{-0.1cm}
\begin{equation}\setcounter{equation}{12}\tag{11}\label{eta}
\eta  = {{{{\sqrt{P_R}}}}\big/{\sqrt{N\big( {\sum\nolimits_{i=1}^K {{P_i}{\alpha _i}}  + \sigma _F^2}\big )}}}.\vspace{-0.1cm}
\end{equation}
\end{lemma}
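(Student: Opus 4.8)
The plan is to evaluate the amplification-power constraint \eqref{P_RIS} under the hypothesis $\eta_n=\eta$ for all $n$ and then solve the resulting scalar equation for $\eta$. With $\bm{\Lambda}=\eta{\bf{I}}$ one has $\bm{\Lambda\Theta\Phi}=\eta\bm{\Theta\Phi}$, and since $\bm{\Theta}$ and $\bm{\Phi}$ are diagonal matrices whose diagonal entries all have unit modulus, $\bm{\Theta\Phi}$ is unitary; hence $\|\bm{\Lambda\Theta\Phi}{\bf{x}}\|^2=\eta^2\|{\bf{x}}\|^2$ for every ${\bf{x}}\in{\mathbb C}^{N\times1}$. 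In particular the configurable phase shifts and the phase noise drop out of the power budget, so \eqref{P_RIS} reduces to $\eta^2\big(\sum_{i=1}^K P_i\,{\mathbb{E}}\{\|{\bf{g}}_{A,i}\|^2\}+{\mathbb{E}}\{\|{\bf{n}}_F\|^2\}\big)=P_R$.

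Next I would compute the two second moments. For the amplification noise, ${\mathbb{E}}\{\|{\bf{n}}_F\|^2\}={\rm tr}(\sigma_F^2{\bf{I}})=N\sigma_F^2$. For the cascaded channel, I would split ${\bf{g}}_{A,i}$ into its LoS and NLoS parts as in \eqref{gtiimp}: the two cross terms vanish because ${\mathbb{E}}\{{\bf{\tilde g}}_{A,i}\}={\bf{0}}$, the LoS term contributes $\frac{\alpha_i\gamma_{A,i}}{1+\gamma_{A,i}}\|{\bf{\bar g}}_{A,i}\|^2=\frac{\alpha_i\gamma_{A,i}}{1+\gamma_{A,i}}N$ because every entry of ${\bf{\bar g}}_{A,i}$ has unit modulus, and the NLoS term contributes $\frac{\alpha_i}{1+\gamma_{A,i}}{\rm tr}({\bf{R}})=\frac{\alpha_i}{1+\gamma_{A,i}}N$ because the normalized correlation matrix has unit diagonal ($r_{p,p}={\rm sinc}(0)=1$ from \eqref{R}). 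Summing the two pieces gives ${\mathbb{E}}\{\|{\bf{g}}_{A,i}\|^2\}=\alpha_i N$.

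Substituting both moments back would leave $\eta^2 N\big(\sum_{i=1}^K P_i\alpha_i+\sigma_F^2\big)=P_R$, and taking the positive square root then yields \eqref{eta}. There is no genuinely hard step here; the only points that need a moment's attention are recognizing that $\bm{\Theta\Phi}$ is unitary (so that the tunable phases and the Von Mises phase noise are irrelevant to the power scaling) and that ${\rm tr}({\bf{R}})=N$ for the normalized spatial correlation matrix, after which the result follows by elementary algebra.
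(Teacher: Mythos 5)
Your proposal is correct and follows essentially the same route as the paper: both exploit that $\bm{\Lambda\Theta\Phi}$ scales norms by $\eta$ (the paper phrases this as ${\mathbb{E}}\{{\bm{\Phi}}^H{\bm{\Theta}}^H{\bm{\Lambda}}^H{\bm{\Lambda\Theta\Phi}}\}=\eta^2{\bf I}$, which is in fact deterministic, matching your unitarity observation), then compute ${\mathbb{E}}\{\|{\bf g}_{A,i}\|^2\}=\alpha_i N$ via the LoS/NLoS split with vanishing cross terms and ${\rm tr}({\bf R})=N$, and ${\mathbb{E}}\{\|{\bf n}_F\|^2\}=N\sigma_F^2$, before solving $\eta^2 N(\sum_i P_i\alpha_i+\sigma_F^2)=P_R$. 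Your write-up merely makes explicit a few steps the paper leaves implicit; no gap.
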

\begin{proof}
With ${\mathbb{E}}\left\{ {{{\bm{\Phi}}^H}{{\bf{\Theta }}^H}{{\bm{\Lambda }}^H}{\bm{\Lambda \Theta\Phi }}} \right\}
\!=\!\eta^2{{\bf{I}}_{N\times N}}$, we can derive ${\mathbb{E}}\{ {{{\left\| {{\bm{\Lambda\Theta\Phi }}{{\bf{g}}_{A,i}}} \right\|}^2}} \}\!=\!\frac{{{\alpha _i}}\eta^2}{{1 + {\gamma _{A,i}}}}(\gamma_{A,i}{\mathbb{E}}\left\{ {{\bf{\bar g}}_{A,i}^H{{{\bf{\bar g}}}_{A,i}}} \right\}+{\mathbb{E}}\left\{ {{\bf{\tilde g}}_{A,i}^H{{{\bf{\tilde g}}}_{A,i}}} \right\})=\frac{{{\alpha _i}}\eta^2}{{1 + {\gamma _{A,i}}}}(\gamma_{A,i}N+N)=\eta^2 N{\alpha_i}.$ Similarly, we have ${\mathbb{E}}\{ {{{\left\| {{\bm{\Lambda\Theta\Phi }}}{{\bf{n}}_F} \right\|}^2}} \}=\eta^2 N{\sigma _F^2}$. Substituting these results into \eqref{P_RIS} yields $\eta^2 N ( {\sum\nolimits_{i = 1}^K {{P_i}{\alpha _i}}  + \sigma _F^2})= {P_R}$.
\end{proof}
Based on Lemma \ref{lemma_eta}, the approximation of the ergodic rate is given in the following theorem.\vspace{-0.2cm}
\begin{theorem}\label{thm_appro}
For the considered multi-pair D2D communication system assisted by an active RIS over spatially correlated channels with phase noise, the closed-form expression of the approximate ergodic rate of $\text U_{B,j}$ is given by $R_j\approx{\tilde R_j}$, where
\begin{equation}\label{R_appro_act}
{\tilde R_j}\!\!=\!\!{\log _2}\!\!\left(\!\!{1 \!+\! \frac{{{P_j}\!\left({\eta^2\Omega _{j,j}}\!+\!2\eta c_{j,j}\Upsilon _{j,j}\!+\!\sigma_{j,j}^2\right)}}{\!\!\!\!{\sum\limits_{i \ne j}^K \!{{P_i}\!\left(\eta^2{\Omega _{i,j}}\!\!+\!2\eta c_{i,j}\Upsilon _{i,j}\!\!+\!\!\sigma_{i,j}^2\right)} \!\! +\!\!\eta^2N{\beta _j}\sigma _F^2  \!+\!\sigma _j^2}}} \!\!\right)\!\!,
\end{equation}
in which $\Omega _{i,j}\!= \!{\alpha _i}{\beta _j}N \!\!+\!{\tau _{i,j}}( {\gamma _{A,i}}{\gamma _{B,j}}{\Gamma _{i,j}} $$+{\gamma _{B,j}}L_{\zeta _i^a,\zeta _i^e}\! +\! {\gamma _{A,i}}L_{\phi _i^a,\phi _i^e}\! +\! {L_0}),\tau_{i,j}\!\!=\!\frac{\alpha _i\beta _j}{\left( 1+\gamma _{A,i} \right) \left( 1+\gamma _{B,j} \right)}$. $\Gamma _{i,j}$, $L_{x,y}$, $L_0$ and ${\Upsilon _{i,j}}$ are given in \eqref{Gamma}, \eqref{L}, \eqref{4} and \eqref{Upsilon} in Appendix \ref{Theorem1}.
\end{theorem}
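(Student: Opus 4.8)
The plan is to reduce $R_j=\mathbb{E}\{\log_2(1+\gamma_j)\}$ to a logarithm of a ratio of expectations and then evaluate those expectations in closed form. First I would apply the standard large-array approximation $\mathbb{E}\{\log_2(1+X/Y)\}\approx\log_2\!\big(1+\mathbb{E}\{X\}/\mathbb{E}\{Y\}\big)$, taken with $X=P_j|g_{j,j}|^2$ the desired-signal power and $Y=\sum_{i\neq j}P_i|g_{i,j}|^2+\|\mathbf{g}_{B,j}^T\bm{\Lambda\Theta\Phi}\|^2\sigma_F^2+\sigma_j^2$ the interference-plus-noise power; this is accurate because both $X$ and $Y$ are sums that concentrate around their means as the number of reflecting elements $N$ grows. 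The task then reduces to computing $\mathbb{E}\{|g_{i,j}|^2\}$ for every pair $(i,j)$ and $\mathbb{E}\{\|\mathbf{g}_{B,j}^T\bm{\Lambda\Theta\Phi}\|^2\}$, where by Lemma~\ref{lemma_eta} I may take $\bm{\Lambda}=\eta\mathbf{I}$.

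Second, I would substitute $g_{i,j}=\eta\,\mathbf{g}_{B,j}^T\bm{\Theta\Phi}\mathbf{g}_{A,i}+h_{i,j}$ and expand the modulus, giving $\mathbb{E}\{|g_{i,j}|^2\}=\eta^2\mathbb{E}\{|\mathbf{g}_{B,j}^T\bm{\Theta\Phi}\mathbf{g}_{A,i}|^2\}+2\eta\,\mathrm{Re}\{\mathbb{E}\{\mathbf{g}_{B,j}^T\bm{\Theta\Phi}\mathbf{g}_{A,i}\}\,\bar h_{i,j}^{*}\}+\mathbb{E}\{|h_{i,j}|^2\}$, using independence of the direct link $h_{i,j}$ from the cascaded link. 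Here $\mathbb{E}\{|h_{i,j}|^2\}=|\bar h_{i,j}|^2+\sigma_{i,j}^2/(1+\gamma_{i,j})=\sigma_{i,j}^2$, which supplies the stand-alone $\sigma_{i,j}^2$ terms in \eqref{R_appro_act}. In the cross term only the LoS$\times$LoS part survives averaging over the NLoS components, and averaging over the phase noise replaces $\bm{\Phi}$ by $\kappa\mathbf{I}$ with $\kappa=\mathbb{E}\{e^{j\tilde\theta_n}\}=I_1(\kappa_{\tilde\theta})/I_0(\kappa_{\tilde\theta})$ the characteristic function of the circular-normal phase noise; this produces the $2\eta c_{i,j}\Upsilon_{i,j}$ term, with $\Upsilon_{i,j}$ built from $\mathrm{Re}\{\bar{\mathbf{g}}_{B,j}^T\bm{\Theta}\bar{\mathbf{g}}_{A,i}\}$ as in \eqref{Upsilon}.

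Third — the heart of the derivation — I would evaluate $\mathbb{E}\{|\mathbf{g}_{B,j}^T\bm{\Theta\Phi}\mathbf{g}_{A,i}|^2\}=\sum_{n,m}\mathbb{E}\{[\mathbf{g}_{A,i}]_n[\mathbf{g}_{A,i}]_m^{*}\}\,\mathbb{E}\{[\mathbf{g}_{B,j}]_n[\mathbf{g}_{B,j}]_m^{*}\}\,e^{j(\theta_n-\theta_m)}\,\mathbb{E}\{e^{j(\tilde\theta_n-\tilde\theta_m)}\}$, where the factorization uses independence of the two reflecting channels and the phase-noise factor equals $1$ when $n=m$ and $\kappa^2$ otherwise. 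Each channel covariance splits into a rank-one LoS outer product of the UPA steering vectors plus an NLoS part proportional to the correlation entry $r_{n,m}$ of \eqref{R}; since $r_{n,n}=1$ the diagonal contributes $N\alpha_i\beta_j$, and multiplying the two covariances on the off-diagonal produces four bilinear sums over $n\neq m$ — LoS$\times$LoS, LoS$\times$R, R$\times$LoS and R$\times$R — which, carrying the common factor $\kappa^2$ and the Rician prefactors, assemble into $\tau_{i,j}\big(\gamma_{A,i}\gamma_{B,j}\Gamma_{i,j}+\gamma_{A,i}L_{\phi_i^a,\phi_i^e}+\gamma_{B,j}L_{\zeta_i^a,\zeta_i^e}+L_0\big)$ with $\Gamma_{i,j}$, $L_{x,y}$, $L_0$ as in \eqref{Gamma}--\eqref{4}; adding the diagonal gives $\Omega_{i,j}$.

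Finally, I would use that $\bm{\Theta}$ and $\bm{\Phi}$ are diagonal with unit-modulus entries, so $\|\mathbf{g}_{B,j}^T\bm{\Lambda\Theta\Phi}\|^2=\eta^2\|\mathbf{g}_{B,j}\|^2$ exactly, its mean is $\eta^2 N\beta_j$, and the dynamic noise contributes $\eta^2 N\beta_j\sigma_F^2$. Collecting the numerator $P_j(\eta^2\Omega_{j,j}+2\eta c_{j,j}\Upsilon_{j,j}+\sigma_{j,j}^2)$ with the denominator $\sum_{i\neq j}P_i(\eta^2\Omega_{i,j}+2\eta c_{i,j}\Upsilon_{i,j}+\sigma_{i,j}^2)+\eta^2 N\beta_j\sigma_F^2+\sigma_j^2$ yields \eqref{R_appro_act}. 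I expect the main obstacle to be the third step: bookkeeping the fourth-order moment into exactly four off-diagonal bilinear forms and verifying that the steering-vector and spatial-correlation double sums collapse into the compact quantities $\Gamma_{i,j}$, $L_{x,y}$, $L_0$ — in particular the purely-correlation term $\sum_{n\neq m}r_{n,m}^2\,e^{j(\theta_n-\theta_m)}$, which has no analogue in the uncorrelated-fading analysis and is why $\mathbf{R}$ enters \eqref{R_appro_act} only through $L_0$ and the mixed terms.
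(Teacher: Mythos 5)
Your proposal is correct and follows essentially the same route as the paper: the same expectation-ratio approximation (the paper cites \cite[Lemma 1]{17}), the same split of $\mathbb{E}\{|g_{i,j}|^2\}$ into cascaded, cross and direct parts with $\mathbb{E}\{\Phi\}=\kappa\mathbf{I}$ for the cross term, and the same four LoS/NLoS bilinear sums yielding $\Gamma_{i,j}$, $L_{\zeta}$, $L_{\phi}$, $L_0$ plus the diagonal $N\alpha_i\beta_j$ and the dynamic-noise term $\eta^2N\beta_j\sigma_F^2$. The only cosmetic difference is that you factorize the fourth moment entrywise over $(n,m)$ while the paper first expands it into four quadratic-form expectations and uses $\mathbb{E}\{\bm{\Phi}\mathbf{R}\bm{\Phi}^H\}=\kappa^2\mathbf{R}+(1-\kappa^2)\mathbf{I}$, which is the same computation.
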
\vspace{-0.15cm}
\begin{proof}
See Appendix \ref{Theorem1}.
\end{proof}\vspace{-0.2cm}
\begin{remark}\label{remark1}
1) By setting the values of $\eta$ and $\sigma_F^2$ in \eqref{R_appro_act} with $\eta=1$ and $\sigma_F^2=0$, the ergodic rate in the passive RIS case is written as
\begin{equation}\label{R_appro_pas}
\tilde{R}_j^{pas}\! =\! {\log _2}\!\left(\!\! {1 \!+\! \frac{P_j^'\left({\Omega _{j,j}}\!+\!2c_{j,j}\Upsilon _{j,j}\!+\!\sigma_{j,j}^2\right)}{{\sum\nolimits_{i \ne j}^K \!P_i^'\!\!\left({\Omega _{i,j}}\!+\!2c_{i,j}\Upsilon _{i,j}\!+\!\sigma_{i,j}^2\right)\!+\! {\sigma _j^2}}} } \right)\!\!.\vspace{-0.05cm}
\end{equation}

2) Furthermore, when $\Omega _{i,j}=\Omega _{j,j}=0$ and $\Upsilon _{i,j}=\Upsilon _{j,j}=0$ for any $i$ and $j$, we derive the ergodic rate in the case without RIS, which is given by\vspace{-0.1cm}
\begin{equation}\label{R_appro_noris}
\tilde{R}_j^{no RIS} = {\log _2}\left( {1 + \frac{P_j^{''}{\sigma _{j,j}^2}}{{\sum\nolimits_{i \ne j}^KP_i^{''}{\sigma _{i,j}^2}+ {\sigma _j^2}}} } \right).\vspace{-0.1cm}
\end{equation}
\end{remark}

The total power consumption in the passive RIS case and the case without RIS are ${P_T^{pas}} =\sum\nolimits_{i = 1}^K{{P_i^{'}}}+ N{P_{SW}}$ and ${P_T^{noRIS}} =\sum\nolimits_{i = 1}^K{{P_i^{''}}}$, respectively. For fairness, we assume ${P_T}={P_T^{pas}}={P_T^{noRIS}}=P$. 
Besides, when the RIS hardware is ideal, i.e., ${\bf{\Phi}}={\bf {I}}_{N\times N}$, the ergodic rates without phase noise are rewritten by substituting $\kappa=1$ into \eqref{R_appro_act} and \eqref{R_appro_pas}.\vspace{-0.2cm}
\begin{corollary}\label{Rician_infinity}
When the Rician factors go to infinity, i.e., $\gamma_{A,i}=\gamma_{B,i}=\gamma_{A,j}=\gamma_{B,j} =\gamma_{i,j}=\gamma_{j,j}\to \infty$ for any $i$ and $j$, the ergodic rates ${\tilde R_j}$ in \eqref{R_appro_act} and ${\tilde R_j^{pas}}$ in \eqref{R_appro_pas} converge to
\begin{equation}\label{R_Rician}
{\tilde R_j} \to {\bar R_j}, {\tilde R_j^{pas}} \to {\bar R_j^{pas}},
\end{equation}
where ${\bar R_j}$ and ${\bar R_j^{pas}}$ are given by \eqref{R_bar} and \eqref{R_bar_pas}, respectively.
\end{corollary}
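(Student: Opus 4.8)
The plan is to pass to the limit inside the SINR expression of \eqref{R_appro_act} (and \eqref{R_appro_pas}) term by term as the common Rician factor $\gamma$ — equal to all of $\gamma_{A,i},\gamma_{B,i},\gamma_{A,j},\gamma_{B,j},\gamma_{i,j},\gamma_{j,j}$ — tends to infinity, and then to invoke continuity of $x\mapsto\log_2(1+x)$. First I would classify which quantities in \eqref{R_appro_act} depend on the Rician factors at all. From their definitions in Appendix \ref{Theorem1}, the quantities $\Gamma_{i,j}$, $L_{\phi _i^a,\phi _i^e}$, $L_{\zeta _i^a,\zeta _i^e}$, $L_0$ and $\Upsilon_{i,j}$ are functions only of the deterministic LoS steering vectors, the arrival/departure angles, $N$ and the spatial correlation matrix ${\bf R}$, none of which involves $\gamma$; similarly $\eta$ from \eqref{eta}, $\beta_j$, $\sigma_{i,j}^2$, $\sigma_F^2$, $\sigma_j^2$, $N$ and $P_i$ carry no Rician dependence. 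The only $\gamma$-dependent objects are the prefactor $\tau_{i,j}=\frac{\alpha_i\beta_j}{(1+\gamma_{A,i})(1+\gamma_{B,j})}$ (together with the $\gamma_{A,i}\gamma_{B,j}$, $\gamma_{B,j}$ and $\gamma_{A,i}$ multipliers it is paired with inside $\Omega_{i,j}$) and the scalar $c_{i,j}$.

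Next I would take the limit of $\Omega_{i,j}$. Writing $\tau_{i,j}=\frac{\alpha_i\beta_j}{(1+\gamma)^2}$, the four contributions to $\tau_{i,j}(\gamma_{A,i}\gamma_{B,j}\Gamma_{i,j}+\gamma_{B,j}L_{\zeta _i^a,\zeta _i^e}+\gamma_{A,i}L_{\phi _i^a,\phi _i^e}+L_0)$ scale respectively as $\frac{\alpha_i\beta_j\gamma^2}{(1+\gamma)^2}\Gamma_{i,j}\to\alpha_i\beta_j\Gamma_{i,j}$, $O(1/\gamma)$, $O(1/\gamma)$ and $O(1/\gamma^2)$, so only the first survives; hence $\Omega_{i,j}\to\alpha_i\beta_j N+\alpha_i\beta_j\Gamma_{i,j}=\alpha_i\beta_j(N+\Gamma_{i,j})$, which is exactly the RIS term in \eqref{R_bar}. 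For $c_{i,j}$, its only $\gamma$-dependence enters through a ratio of $\sqrt{\gamma/(1+\gamma)}$-type factors, each tending to $1$, so $c_{i,j}\to\kappa\sigma_{i,j}\sqrt{\alpha_i\beta_j}$, matching the cross term in \eqref{R_bar}.

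Finally, I would substitute these limits into the numerator and denominator of the SINR in \eqref{R_appro_act}. Since the denominator always contains $\sigma_j^2>0$, it stays bounded away from zero, so the SINR converges to the ratio appearing in \eqref{R_bar}; continuity of $\log_2(1+\cdot)$ then yields $\tilde R_j\to\bar R_j$. The passive case is identical: by Remark \ref{remark1}, \eqref{R_appro_pas} is \eqref{R_appro_act} specialized to $\eta=1$ and $\sigma_F^2=0$, so the same term-wise limits give $\tilde R_j^{pas}\to\bar R_j^{pas}$ of \eqref{R_bar_pas}.

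The main obstacle is purely the bookkeeping in $\lim_{\gamma\to\infty}\Omega_{i,j}$: one has to recognize that the $\gamma_{A,i}\gamma_{B,j}\Gamma_{i,j}$ term persists because its $\gamma^2$ growth exactly offsets the $\gamma^2$ in the denominator of $\tau_{i,j}$, whereas the single-$\gamma$ terms built from $L_{\phi _i^a,\phi _i^e}$, $L_{\zeta _i^a,\zeta _i^e}$ and the constant $L_0$ are annihilated in the limit. Once this surviving-versus-vanishing split is settled, everything else is the routine continuity argument above.
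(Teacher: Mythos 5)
Your argument is correct and is exactly the (implicit) route the paper takes: the corollary follows by letting the common Rician factor tend to infinity term by term in \eqref{R_appro_act}, noting that only $\tau_{i,j}\gamma_{A,i}\gamma_{B,j}\Gamma_{i,j}\to\alpha_i\beta_j\Gamma_{i,j}$ and $c_{i,j}\to\kappa\sigma_{i,j}\sqrt{\alpha_i\beta_j}$ survive, and then using continuity of $\log_2(1+\cdot)$, with the passive case obtained by setting $\eta=1$, $\sigma_F^2=0$. Your bookkeeping of which quantities ($\Gamma_{i,j}$, $L_{x,y}$, $L_0$, $\Upsilon_{i,j}$, $\eta$, $\sigma_{i,j}^2$) are $\gamma$-independent matches the definitions in Appendix \ref{Theorem1}, so the proof is complete as written.
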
\vspace{-0.1cm}
Corollary \ref{Rician_infinity} means that when the communication environment has limited scatters, the ergodic rates converge to fixed values depending on the phase shifts.\vspace{-0.2cm}
\begin{corollary}\label{coro_P_inf}
If the total power consumption $P\to\infty$ (i.e., $P_R\to\infty$, $P_i\approx P_i^'\approx P_j\approx P_j^'\to\infty$ for any $i$ and $j$) and the direct links are non-exist due to the blockage, we can derive that ${\tilde R_j} \to \dot R_j$ and ${\tilde R_j^{pas}} \to \dot R_j$, where\vspace{-0.1cm}
\begin{equation}\label{P_infty}\setcounter{equation}{18}
\dot R_j= {\log _2}\left( {1 + \frac{{{\Omega _{j,j}}}}{{\sum\nolimits_{i \ne j}^K {{\Omega _{i,j}}}}}} \right).\vspace{-0.1cm}
\end{equation}
\end{corollary}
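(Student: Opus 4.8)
The plan is to start from the closed-form $\tilde R_j$ in Theorem~\ref{thm_appro} (and the passive analogue $\tilde R_j^{pas}$ in \eqref{R_appro_pas}) and to track the order of growth of every term of the SINR ratio as the powers diverge. First I would impose the ``blocked direct link'' hypothesis: if $\text{U}_{A,i}\to\text{U}_{B,j}$ is obstructed then $\sigma_{i,j}=0$, hence $\bar h_{i,j}=0$, $\sigma_{i,j}^2=0$, and the cross-term coefficient $c_{i,j}$ also vanishes, since it is proportional to the direct-link statistics (compare the cross terms in \eqref{R_bar} and \eqref{R_appro_act}). The argument of the logarithm in $\tilde R_j$ then collapses to $P_j\eta^2\Omega_{j,j}\big/\big(\sum_{i\ne j}^K P_i\eta^2\Omega_{i,j}+\eta^2 N\beta_j\sigma_F^2+\sigma_j^2\big)$, and the passive case reduces similarly to $P_j'\Omega_{j,j}\big/\big(\sum_{i\ne j}^K P_i'\Omega_{i,j}+\sigma_j^2\big)$.

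Next I would normalize the active-RIS ratio by $P_j\eta^2$. The interference summands become $(P_i/P_j)\Omega_{i,j}\to\Omega_{i,j}$ since the transmit powers stay of the same order ($P_i\approx P_j$). The dynamic-noise term becomes $N\beta_j\sigma_F^2/P_j\to0$ because $P_j\to\infty$. The static-noise term becomes $\sigma_j^2/(P_j\eta^2)$; substituting $\eta^2=P_R\big/\big(N(\sum_k P_k\alpha_k+\sigma_F^2)\big)$ from Lemma~\ref{lemma_eta} turns it into $\sigma_j^2 N(\sum_k P_k\alpha_k+\sigma_F^2)\big/(P_j P_R)$, which is $O(1/P_R)\to0$ once we use that the $P_k$ are comparable to $P_j$ while $P_R\to\infty$. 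Collecting these limits, the ratio inside the logarithm of $\tilde R_j$ converges to $\Omega_{j,j}\big/\sum_{i\ne j}^K\Omega_{i,j}$, and continuity of $x\mapsto\log_2(1+x)$ then yields $\tilde R_j\to\dot R_j$. For the passive rate, dividing by $P_j'$ and using $P_i'/P_j'\to1$ and $\sigma_j^2/P_j'\to0$ gives the identical limit, so $\tilde R_j^{pas}\to\dot R_j$ as well; note that the common limit $\dot R_j$ contains neither $\eta$ nor $\sigma_F^2$, which is exactly the point the corollary makes.

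The only genuinely delicate step is the static-noise term in the active case: to conclude $P_j\eta^2\to\infty$ one must combine the explicit expression for $\eta$ from Lemma~\ref{lemma_eta} with both halves of the growth hypothesis, i.e.\ $P_R\to\infty$ together with the transmit powers remaining of the same order. If $P_R$ were held fixed while the $P_i$ grew, $\eta^2$ would decay and this term would not vanish, so the joint hypothesis is essential rather than cosmetic. Everything else is an elementary limit of a rational function in the powers, so once the normalization is chosen the argument is short.
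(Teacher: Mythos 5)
Your proposal is correct and follows essentially the same route the paper intends: the paper states Corollary~2 without a formal proof, but its accompanying remark (dynamic and static noise become negligible, desired signal and interference share the common factor $\eta^2$) is exactly the normalization-by-$P_j\eta^2$ limit you carry out, including the use of Lemma~1 to see that the static-noise term is $O(1/P_R)$. Your observation that $P_R\to\infty$ (and not just $P_i\to\infty$) is needed for the active case is a correct and worthwhile clarification, but it does not constitute a different approach.
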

Corollary \ref{coro_P_inf} shows that the ergodic rate of the active RIS aided system with high transmit power converges to the same value as the passive RIS counterpart. This is because the dynamic noise caused by active RIS and the static noise are negligible, and the power of both the desired signal and the multi-pair interferences increase by the same factor of $\eta^2$. In other words, the ergodic sum rate in passive RIS case is close to the active counterpart at high power consumption.\vspace{-0.2cm}
\begin{remark}\label{onepair}
If ${\bf{R}}={\bf{I}}_{N \times N}$, the reflecting links are simplified to
uncorrelated Rician fading channel. Then, we consider a special case that there is only one user-pair ($U_{A,i}$ and $U_{B,i}$) and the direct link is blocked, the ergodic rate becomes
$\tilde {{R_1}} = {\log _2}\!\big(\! {1 + \frac{{{P_i}{P_R}\Omega _{i,i}^*}}{{N{P_R}{\beta _i}\sigma _F^2 + N\left( {{P_i}{\alpha _i} + \sigma _F^2} \right)\sigma _i^2}}} \big)$ where $\Omega _{i,i}^*\!= \!{\alpha _i}{\beta _i}N \!\!+\!{\tau _{i,i}}{\gamma _{A,i}}{\gamma _{B,i}}{\Gamma _{i,i}}$.
Assuming that the hardware of the active RIS is ideal, if the transmit power $P_i$ is scaled down by $\frac{1}{{{N}}}$, i.e. ${P_i} = \frac{{{E_u}}}{{{N}}}$ for a fixed $E_u$, we have
$\tilde{R_1} \!\to\! {\log _2}\!\left( {1 \!+\!\frac{{{E_u}{P_R}{\tau _{i,i}}{\gamma _{A,i}}{\gamma _{B,i}}}}{{{P_R}{\beta _i}{\sigma _F}^2 +{ \sigma _F^2} \sigma _i^2}}}\! \right)\!\text{as } N \!\to\! \infty$.
\end{remark}\vspace{-0.05cm}
\begin{proof}
See Appendix \ref{N_infty}.
\end{proof}\vspace{-0.3cm}
\section{Power Control and Phase Shifts Optimization}
In this section, we propose a GA-based method to jointly optimize the transmit power and the phase shifts to maximize the ergodic sum rate. Due to the hardware limit, the discrete phase shift design is considered and the optimization problem is formulated as follows:
\setcounter{equation}{18}
\begin{subequations}\label{Optimization_Problem}
\begin{align}\hspace{-1.5cm}
\mathop {\max }\limits_{{\bm{p}}, {\bm{\theta}}}&\ \tilde R=\sum\nolimits_{j = 1}^K \tilde R_j\\
\label{s1}\text{s.t.}&\ 0< p_i\le p_{max},\forall i=1,\cdots,K,\\
\label{s2}&{\ \theta _n} \!\in\! \left\{\! {0,{{2\pi }}\!/{{{2^B}}}\!,...,\left(\! {{2^B}\!\! -\! 1} \!\right)\!{{2\pi }}\!/{{{{\rm{2}}^B}}}}\right\}\!,\forall n\!=\!1,\cdots,N,
\end{align}
\end{subequations}
where ${\bm{p}}=[p_1,p_2,\cdots,p_K]$, ${\bm{\theta}}=[\theta_1,\theta_2,\cdots,\theta_N]$ and $B$ is the number of bits for controlling the phase shifts.

Note that GA is a globally search method inspired by imitating the biological evolution, which can automatically accumulate knowledge about the search space, and adaptively control the search process to obtain the optimal solution. The details of our proposed GA-based method are given in Algorithm \ref{GA}.
\vspace{-0.3cm}
\begin{algorithm}[t] 
	\caption{GA-based method} 
	\label{GA} 
	\textbf{Input}: the population size $N_{po}$, the number of parents $N_{pa}$, the number of mutated chromosomes $N_{mu}$, the maximum iteration number $N_i$ and the termination fitness value $f$.
	
	\textbf{Output}: the elite ${\bm{e}}_t=[{\bm{p}_t},{\bm{\theta}_t}]$ of each generation.
	
	\textbf{Initialize}: the first generation $\bm{e}_{s}$ for $s=1,\cdots,N_{po}$,
	\begin{algorithmic}[1]
		\For{$t=1$ to $N_{i}$}
		\State {Calculate the fitness value $f(\bm{e}_{s})={\frac{1}{\tilde R({\bm{p}_{s}},{\bm{\theta}_{s}})}}$;}
		\State Note down $\bm{e}_{t}$ with the minimum fitness value;
		\If {$f({\bm{e}_{t}})>f$}
		\State Applying roulette wheel selection on the individ-
        \Statex\ \ \ \ \ \ \ \ \ uals (except $\bm{e}_{t}$) to get $N_{pa}$ parent chromosomes;
		\State Cut off the parent chromosomes at a random place
        \Statex\ \ \ \ \ \ \ \ \ ${c}$ for $c=1,\cdots,K+N$;
		\State Hybridize them to produce children chromosomes;
		\State Chose $N_{mu}$ chromosome and mutated places $n_1',$
        \Statex\ \ \ \ \ \ \ \ $n_2'$ at random for $n_1'\!=\!1,\!\cdots\!,K$ and $n_2'\!=\!1,\!\cdots\!,N$;
		\State Update $p_{n_1'}$ with a random value subject to \eqref{s1};
		\State Update $\theta_{n_2'}$ with a random value subject to \eqref{s2};
	    \EndIf
		\EndFor
	\end{algorithmic}
\end{algorithm}
\section{Numerical Results}\vspace{-0.1cm}
In this section, we provide simulation results to demonstrate the correctness of our derivations and evaluate the performance of the proposed algorithm. The RIS is deployed at (30 m, 0 m, 8 m) with $N=32$ elements. The number of D2D user-pairs are set to $K=6$, which are distributed in a rectangular plane with four points (0 m, 0 m, 1.6 m), (60 m, 0 m, 1.6 m), (0 m, 25 m, 1.6 m) and (60 m, 25 m, 1.6 m). The large-scale fading coefficients are $PL = -30 - 10\chi {\log _{10}}\left( {d} \right)$ dB, where $\chi$ is the path-loss exponent, and $d$ is the distance between the D2D users (or the D2D user and the RIS) in meters. We set $\chi_{d}=3.8$ for the direct links and $\chi_r=2.2$ for the reflecting links. The AoA and AoD of all channels are randomly generated from $[ {0,2\pi } )$. The Rician factors are set to 10 dB. Unless stated otherwise, we set $P=30$ dBm, $B=3$, $\kappa_{\tilde \theta}=4$, $\sigma_F^2=-70$ dBm, $\sigma_j^2=-80$ dBm \cite{7}, $P_{DC}=-5$ dBm, $P_{SW}=-10$ dBm, $\varepsilon =0.8$ \cite{15}. Note that the curves marked ``PCPSO" represent joint power control and phase shifts optimization while the curves marked ``PSO" represent optimizing the phase shifts with equal power allocation.
\begin{figure}\vspace{-1cm}
  \centering
  \includegraphics[scale=0.5]{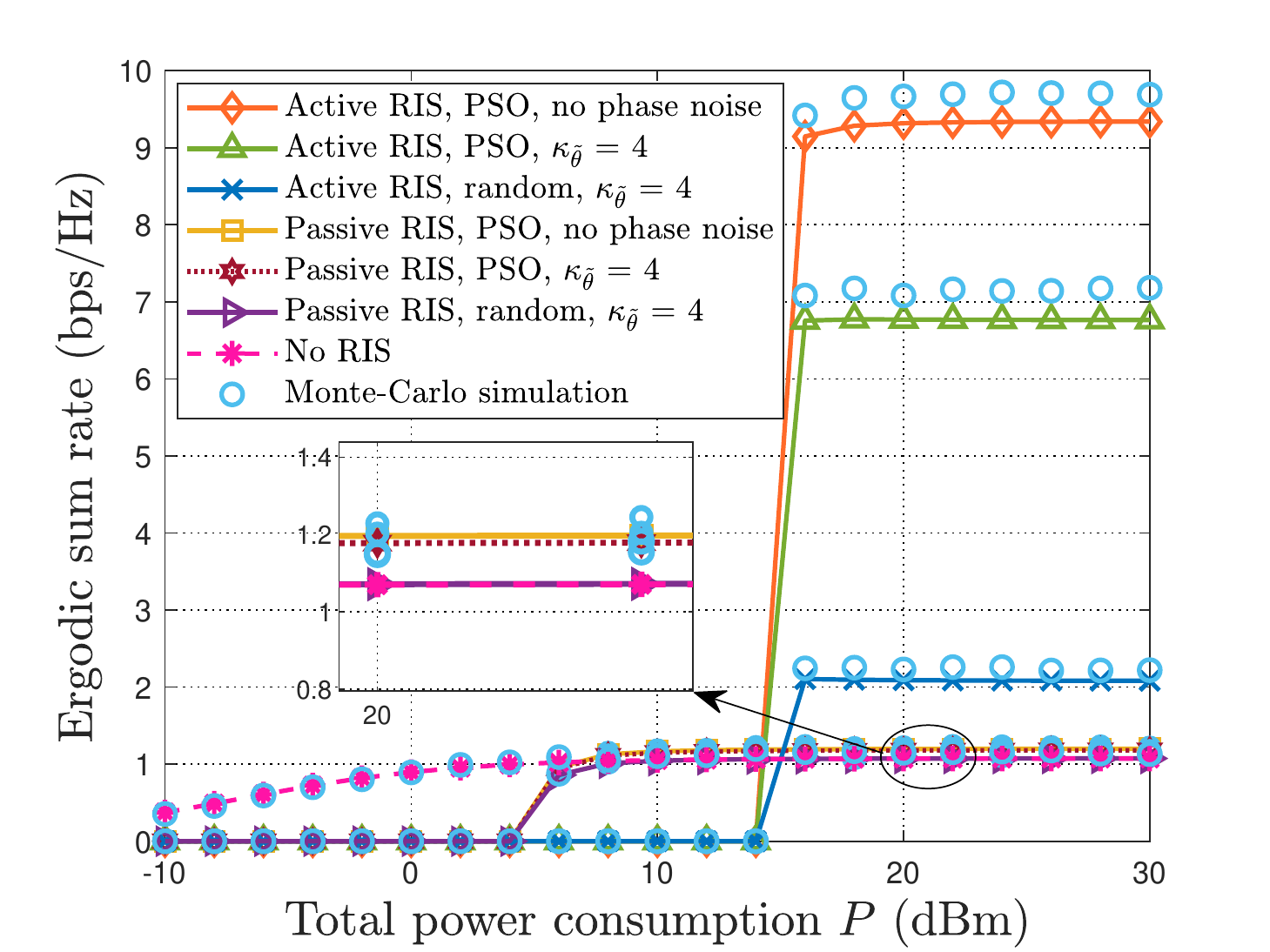}\vspace{-0.3cm}
  \caption{The ergodic sum rate versus the total power consumption.}\label{transmit_power}\vspace{-0.5cm}
\end{figure}
\begin{figure}
\centering
\includegraphics[scale=0.5]{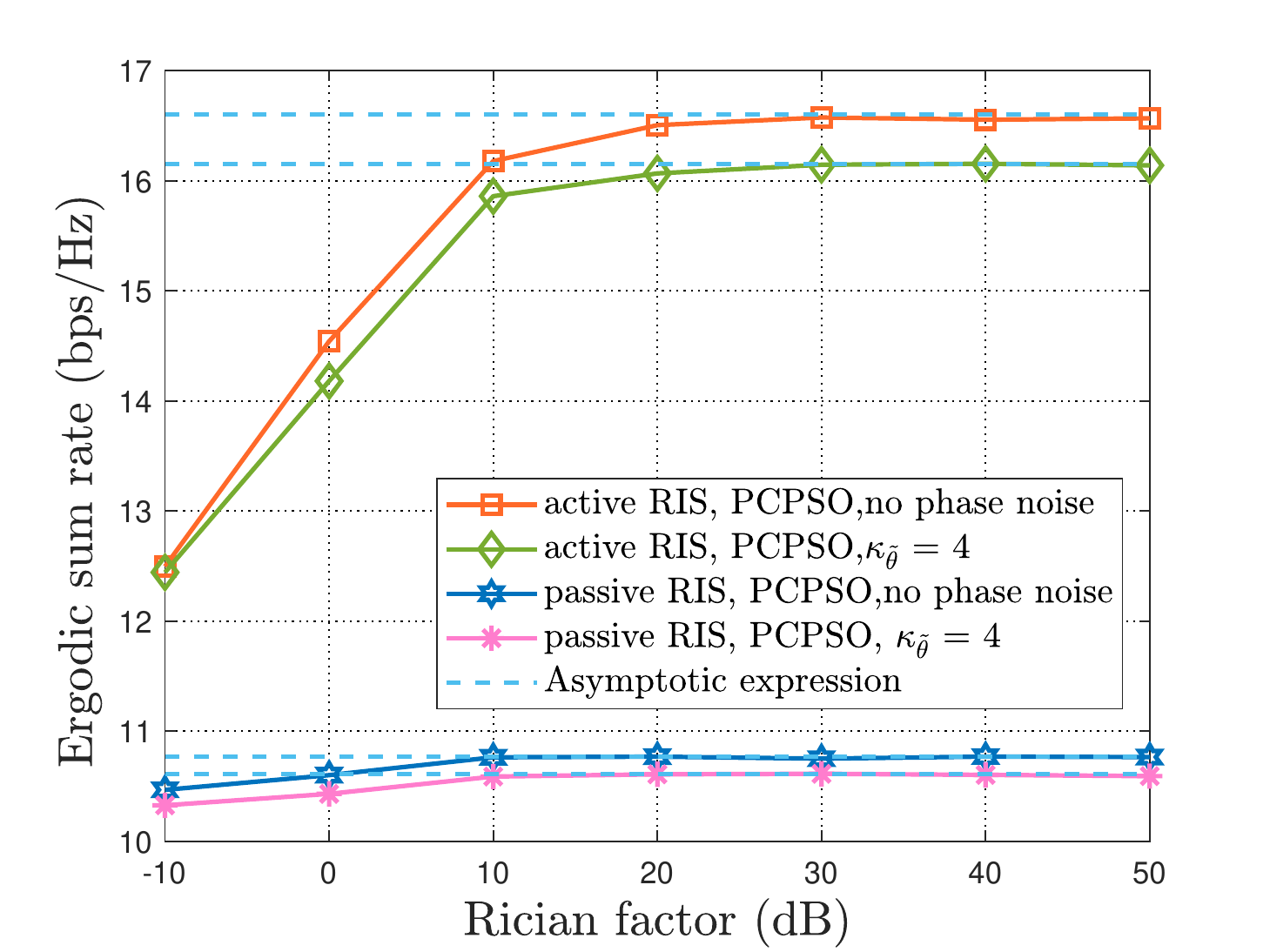}\vspace{-0.3cm}
\caption{The ergodic sum rate versus the Rician factor.}\label{Rician}\vspace{-0.6cm}
\end{figure}

Fig. \ref{transmit_power} depicts the ergodic sum rate versus $P$ with both random and optimized phase shifts in different cases. When $P$ is too small to support hardware power consumption, the ergodic rate is set to zero, i.e., $\tilde{R}_j=0$ when $P\le N(P_{DC}+P_{SW})$ and $\tilde{R}_j^{pas}=0$ when $P\le NP_{SW}$. It can be observed that the closed-form expression in \eqref{R_appro_act} matches well with the Monte-Carlo simulation in \eqref{monte}, which validates the accuracy of our derived results. We observe that the active RIS outperforms the passive RIS while the passive RIS achieves negligible performance gain compared with the case without RIS. As expected, the ergodic sum rates without phase noise are higher. In addition, the proposed GA-based method shows its effectiveness on rate improvement compared with the random phase shifts, which demonstrates the importance of designing the phase shifts. It also indicates that the active RIS can overcome the ``multiplicative fading" effect and achieve noticeable performance gains compared with the passive RIS.\vspace{-0.4cm}
\begin{figure}
\centering\vspace{-1cm}
\includegraphics[scale=0.5]{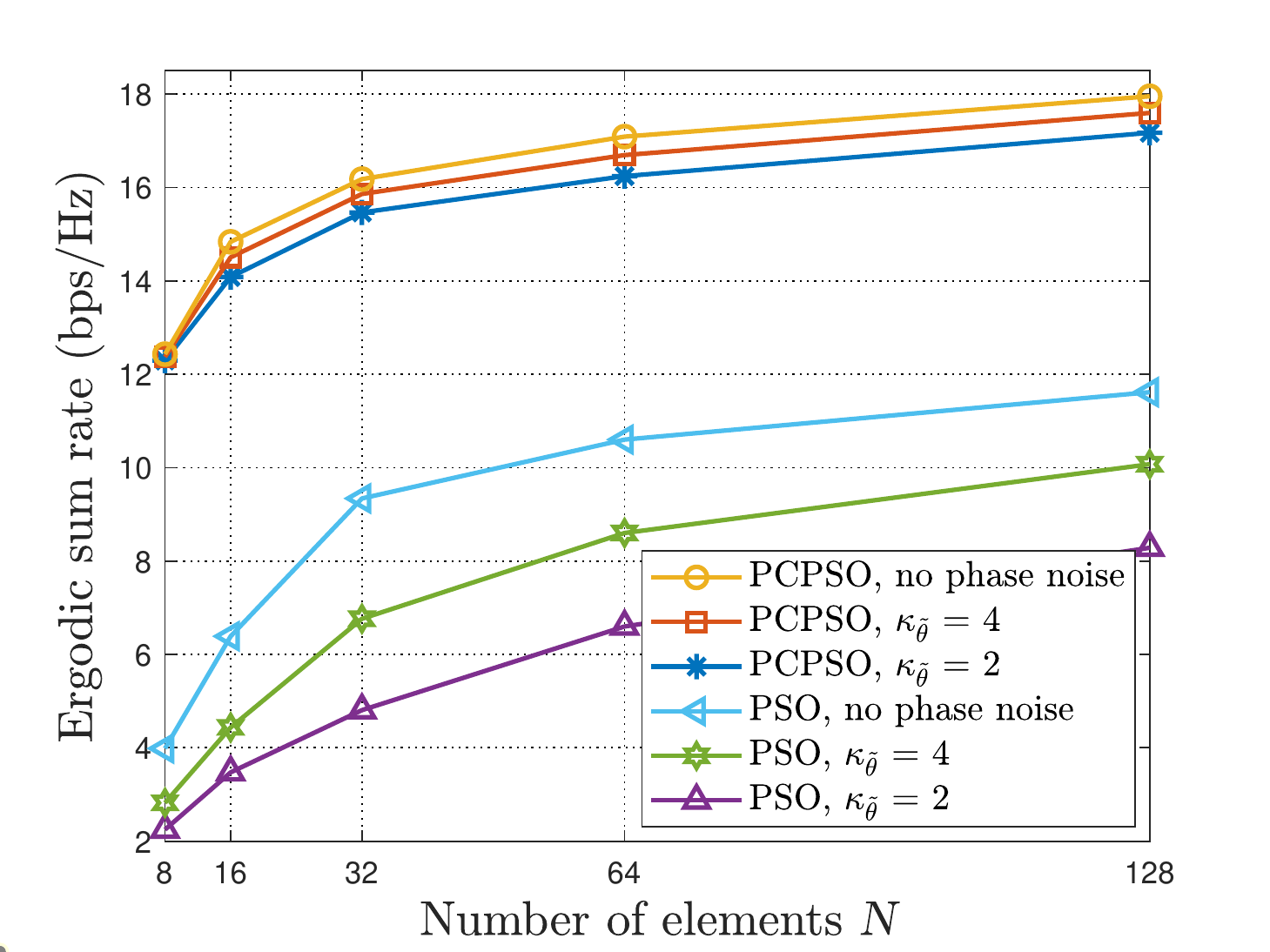}\vspace{-0.3cm}
\caption{The ergodic sum rate versus the number of elements $N$.}\label{N}\vspace{-0.6cm}
\end{figure}

In Fig. \ref{Rician}, the curves marked as ``Asymptotic expression" are obtained by Corollary \ref{Rician_infinity}. The figure shows that the ergodic sum rates notably increase as Rician factors increase at the beginning and converge to constants when the LoS components are large. In other words, the ergodic sum rates are relatively higher in poor scattering environment, which is consistent with the conclusion in \cite{16}. Hence, the RIS should be deployed in the places with poor scatters.

It can be seen from Fig. \ref{N} that the ergodic sum rates increase with the number of reflecting elements $N$ in the active RIS aided system. Even a few elements may improve the ergodic sum rates significantly. In addition, the ergodic sum rates decrease with the decrease of the phase noise parameter $\kappa_{\tilde \theta}$. Increasing the number of reflecting elements achieves limited performance gain when $N$ is large. However, the power control can achieve potential gains with the increase of $N$, and is shown to be an effective way on performance improvement compared with equal power allocation.
Therefore, deploying an active RIS with a small number of reflecting elements is sufficient to achieve significant performance gains.
\vspace{-0.4cm}
\section{Conclusion}
This letter investigated the system performance for a multi-pair D2D communication system aided by an active RIS over spatially correlated channels with phase noise and direct link. The approximate closed-form expression of the ergodic sum rate was derived and analyzed under different channel conditions. The impact of the phase noise on the ergodic sum rate was also discussed. The proposed GA-based method was effective for ergodic sum rate maximization. The simulation results showed that the active RIS outperforms the passive RIS.
\vspace{-0.2cm}
\begin{appendices}
\vspace{-0.4cm}
\section{}\label{Theorem1}
By using \cite[\emph{Lemma 1}]{17}, the ergodic rate can be approximated by
\begin{align}\label{appro}
R_j \!\approx\! {\log _2}\!\!\left(\!\!1\!+\!\!\frac{{{P_{j}}{\mathbb{E}}\!\left\{{{\left| g_{j,j} \right|}^2}\right\}}}{{
\sum\nolimits_{i \ne j}^K\!{{P_{i}}{\mathbb{E}}\!\left\{\!{{\left| g_{i,j} \right|}^2}\!\right\}}
\!\!+\! {\mathbb{E}}\!\left\{\!{{\left\| {{\bf{g}}_{B,j}^T\bm{\Lambda \Theta\Phi }}\right\|}^2}\!\right\}\!\sigma_F^2 \!+\! \sigma _j^2}}\!\!\right)\!\!.
\end{align}

Since $\tilde {\bf{g}}_{A,i}$, $\tilde {\bf{g}}_{B,i}$ and $h_{i,j}$ are independent of each other,
we have ${\mathbb{E}}\left\{ {{{\left| g_{i,j} \right|}^2}} \right\}={\mathbb{E}}\left\{ {{{\left| {{\bf{g}}_{B,j}^T{\bm{\Lambda\Theta\Phi }} {{{\bf{g}}}_{A,i}}} \right|}^2}} \right\}+\sigma_{i,j}^2+2{\mathbb{E}}\left\{{{\mathop{\rm Re}\nolimits} \left\{ {{\bf{g}}_{B,j}^T{\bf{\Lambda \Theta \Phi }}{{\bf{g}}_{A,i}}h_{i,j}^*} \right\}}\right\}$.
First, we derive that
\begin{align}\label{E1}
{\mathbb{E}}\{& {{{| {{\bf{g}}_{B,j}^T{\bm{\Lambda\Theta\Phi }} {{{\bf{g}}}_{A,i}}}|}^2}} \}
\!\!=\!{\mathbb{E}}\!\left\{\! {{\bf{g}}_{B,j}^T\!{\bm{\Lambda\Theta\Phi }} {{{\bf{g}}}_{A,i}}{\bf{g}}_{A,i}^H\!{{\bm{\Phi}}^H}\!{\bm \Theta ^H}\!{{\bf{\Lambda }}^H}\!{\bf{g}}_{B,j}^*}\!\right\}\nonumber\\
 \buildrel (a) \over =\  &\tau_{i,j}(\gamma _{A,i}\gamma _{B,j}{\mathbb{E}}\left\{ {{\bf{\bar g}}_{B,j}^T{\bm{\Lambda\Theta\Phi }} {{{\bf{\bar g}}}_{A,i}}{\bf{\bar g}}_{A,i}^H{{\bm{\Phi}}^H}{{\bf{\Theta }}^H}{{\bm{\Lambda }}^H}{\bf{\bar g}}_{B,j}^*} \right\}\nonumber\\
&+ \gamma _{B,j}{\mathbb{E}}\left\{ {{\bf{\bar g}}_{B,j}^T{\bm{\Lambda\Theta\Phi }} {{{\bf{\tilde g}}}_{A,i}}{\bf{\tilde g}}_{A,i}^H{{\bm{\Phi}}^H}{{\bf{\Theta }}^H}{{\bm{\Lambda }}^H}{\bf{\bar g}}_{B,j}^*} \right\}\nonumber\\
&+ \gamma _{A,i}{\mathbb{E}}\left\{ {{\bf{\tilde g}}_{B,j}^T{\bm{\Lambda\Theta\Phi }} {{{\bf{\bar g}}}_{A,i}}{\bf{\bar g}}_{A,i}^H{{\bm{\Phi}}^H}{{\bf{\Theta }}^H}{{\bm{\Lambda }}^H}{\bf{\tilde g}}_{B,j}^*} \right\}\nonumber\\
&+ {\mathbb{E}}\left\{ {{\bf{\tilde g}}_{B,j}^T{\bm{\Lambda\Theta\Phi }} {{{\bf{\tilde g}}}_{A,i}}{\bf{\tilde g}}_{A,i}^H{{\bm{\Phi}}^H}{{\bf{\Theta }}^H}{{\bm{\Lambda }}^H}{\bf{\tilde g}}_{B,j}^*} \right\}),
\end{align}
where $(a)$ is derived by removing zero terms. Then, we can obtain the following result according to the Euler's formula.
\begin{align}\label{1}
\!\!\!\!&{\mathbb{E}}\!\left\{ {{\bf{\bar g}}_{B,j}^T{\bm{\Lambda\Theta\Phi }} {{{\bf{\bar g}}}_{A,i}}{\bf{\bar g}}_{A,i}^H{{\bf{\Phi}}^H}{{\bf{\Theta }}^H}{{\bm{\Lambda }}^H}{\bf{\bar g}}_{B,j}^*} \right\}
\nonumber\\
=&\eta^2{\mathbb{E}}\!\left\{\!\! {{{\left| {\sum\nolimits_{n = 1}^N \!\!\!\!\!{{e^{j\left( {{\theta _n} + {{\tilde \theta }_n} } \right)\!\!}}\left[{{{\bf{\bar g}}}_{B,j}}\right]_n\!\left[{{{\bf{\bar g}}}_{A,i}}\right]_n}\! } \right|}^2}}\! \right\}\! \!=\! \eta^2(\!N\!+\!\Gamma_{i,j}\!),
\end{align}\vspace{-0.1cm}
where
\begin{align}
\label{Gamma}&{\Gamma _{i,j}} = 2{\kappa ^2}\!\!\!\!\!\!\!\!\sum\limits_{1 \le q <p \le N} \!\!\!\!\!\!\!\!{\cos \left( {{\theta _p} - {\theta _q} + (2\pi/ \lambda) \left( { \bar{s}_{p,q}^{i,j}+ \bar{t}_{p,q}^{i,j}} \right)} \right)},
\end{align}
$\bar{s}_{p,q}^{i,j}=\left( {h\left( p \right) - h\left( q \right)} \right){d_H}\left( {\sin \phi _i^a\cos \phi _i^e + \sin \zeta _j^a\cos \zeta _j^e} \right)$ and $\bar{t}_{p,q}^{i,j}=\left( {v\left( p \right) - v\left( q \right)} \right){d_V}\left( {\sin \phi _i^e + \sin \zeta _j^e} \right)$. Here, $\kappa=\frac{I_1(\kappa_{\tilde \theta})}{I_0(\kappa_{\tilde \theta})}$, and $I_p(\kappa_{\tilde \theta})$ is the modified Bessel function of the first kind and order $p$ \cite{13}.
Next, applying ${\mathbb{E}}\left\{ {{\bf{\Phi R}}{{\bf{\Phi }}^H}} \right\}={\mathbb{E}}\left\{ {{{\bf{\Phi }}^H}{\bf{R\Phi }}} \right\}={\kappa ^2}{\bf{R}}+\left( {1 - {\kappa ^2}} \right){{\bf{I}}_{N\times N}}$, we can derive that
\begin{align}\label{2}
&{\mathbb{E}}\left\{ {{\bf{\bar g}}_{B,j}^T{\bm{\Lambda\Theta\Phi }} {{{\bf{\tilde g}}}_{A,i}}{\bf{\tilde g}}_{A,i}^H{{\bm{\Phi }}^H}{{\bf{\Theta }}^H}{{\bm{\Lambda }}^H}{\bf{\bar g}}_{B,j}^*} \right\}\nonumber\\
=& {\eta ^2}{\mathbb{E}}\left\{ {{\bf{\bar g}}_{B,j}^T{\bf{\Theta \Phi }}{\mathbb{E}}\left\{ {{{{\bf{\tilde g}}}_{A,i}}{\bf{\tilde g}}_{A,i}^H} \right\}{{\bf{\Phi }}^H}{{\bf{\Theta }}^H}{\bf{\bar g}}_{B,j}^*} \right\}\nonumber\\
=& {\eta ^2}{\mathbb{E}}\left\{ {{\bf{\bar g}}_{B,j}^T{\bf{\Theta }}{\mathbb{E}}\left\{ {{\bf{\Phi R}}{{\bf{\Phi }}^H}} \right\}{{\bf{\Theta }}^H}{\bf{\bar g}}_{B,j}^*} \right\}\nonumber\\
=& {\eta ^2}\left( {{\kappa ^2}{\mathbb{E}}\left\{ {{\bf{\bar g}}_{B,j}^T{\bf{\Theta R}}{{\bf{\Theta }}^H}{\bf{\bar g}}_{B,j}^*} \right\} + \left( {1 - {\kappa ^2}} \right){\mathbb{E}}\left\{ {{\bf{\bar g}}_{B,j}^T{\bf{\bar g}}_{B,j}^*} \right\}} \right)\nonumber\\
=&{\eta ^2} ({{\kappa ^2}N+L_{{\zeta _j^a},{\zeta _j^e}}  + ( {1 - {\kappa ^2}} )N} )={\eta ^2}(N+L_{{\zeta _j^a},{\zeta _j^e}}),
\end{align}where
\begin{align}\label{L}
L_{x,y}\!\!=2{\kappa ^2}\!\!\!\!\!\!\!\!\sum\limits_{1 \le q < p \le N}\!\!\!\!\!\!\!\!{{r_{p,q}}\cos\left( {{\theta _p}\!-\!{\theta _q}\!+\!(2\pi/ \lambda)\left( { s_{p,q}^{x,y}+ t_{p,q}^{x,y}} \right)} \right)}.
\end{align}
$r_{p,q}$ is given by \eqref{R}, $s_{p,q}^{x,y}=\left( {h\left( p \right) - h\left( q \right)} \right){d_H}\sin x\cos y$ and $t_{p,q}^{x,y}=\left( {v\left( p \right) - v\left( q \right)} \right){d_V}\sin y$. Similarly, we have
\begin{align}\label{3}
&{\mathbb{E}}\!\left\{\! {{\bf{\tilde g}}_{B,j}^T{\bm{\Lambda\Theta\Phi }} {{{\bf{\bar g}}}_{A,i}}{\bf{\bar g}}_{A,i}^H{{\bm{\Phi }}^H}{{\bf{\Theta }}^H}{{\bm{\Lambda }}^H}{\bf{\tilde g}}_{B,j}^*} \!\right\}\nonumber\\
=&{\eta ^2}{\mathbb{E}}\left\{ {{\bf{\bar g}}_{A,i}^H{{\bf{\Phi }}^H}{{\bf{\Theta }}^H}{\mathbb{E}}\left\{ {{\bf{\tilde g}}_{B,j}^{\rm{*}}{\bf{\tilde g}}_{B,j}^T} \right\}{\bf{\Theta \Phi }}{{{\bf{\bar g}}}_{A,i}}} \right\}\nonumber\\
=&{\eta ^2}\left( {{\kappa ^2}{\mathbb{E}}\left\{ {{\bf{\bar g}}_{A,i}^H{\bf{\Theta R}}{{\bf{\Theta }}^H}{\bf{\bar g}}_{A,i}} \right\} + \left( {1 - {\kappa ^2}} \right){\mathbb{E}}\left\{ {{\bf{\bar g}}_{A,i}^H{\bf{\bar g}}_{A,i}} \right\}} \right)\nonumber\\
=&{\eta ^2} ({{\kappa ^2}N+L_{{\phi _i^a},{\phi _i^e}}  + ( {1 - {\kappa ^2}} )N} )={\eta ^2}(N+L_{{\phi _i^a},{\phi _i^e}}),\\
\label{4}
&{\mathbb{E}}\{ {{\bf{\tilde g}}_{B,j}^T{\bm{\Lambda\Theta\Phi }} {{{\bf{\tilde g}}}_{A,i}}{\bf{\tilde g}}_{A,i}^H{{\bm{\Phi }}^H}{{\bf{\Theta }}^H}{{\bm{\Lambda }}^H}{\bf{\tilde g}}_{B,j}^*}\}\nonumber\\
= &\ {\eta ^2}(N+2{\kappa ^2}\!\!\!\!\!\!\!\!\sum\limits_{1 \le q<p \le N}\!\!\!\!\!\!\!\!{{r_{p,q}^2}\cos\left( {{\theta _p}-{\theta _q}}\right)})\buildrel \Delta \over={\eta ^2}(N+L_0).
\end{align}
Substituting \eqref{1}, \eqref{2}, \eqref{3} and \eqref{4} into \eqref{E1}, we arrive at
\begin{align}\label{E1_result}
{\mathbb{E}}\!&\left\{\!{{\left|{{\bf{g}}_{B,j}^T{\bf{\Lambda\Theta\Phi }}{{{\bf{g}}}_{A,i}}}\right|}^2}\!\right\}
\!=\!\eta^2\tau_{i,j}({\gamma _{A,i}}{\gamma _{B,j}}(N\!+\!\Gamma _{i,j})\!+\!\gamma_{A,i}(N\nonumber\\&\!\!+\!L_{{\phi _i^a},{\phi _i^e}})\!+\!\gamma_{B,j}(N\!+\!L_{{\zeta _j^a},{\zeta _j^e}})\!+\!(N\!+\!L_0))\buildrel \Delta \over=\eta^2\Omega _{i,j}.
\end{align}
Similarly, we have \vspace{-0.1cm}
\begin{equation}\label{noise_result}
{\mathbb{E}}\left\{ {{{\left\| {{\bf{g}}_{B,j}^T\bm{\Lambda \Theta\Phi } } \right\|}^2}} \right\}
\!\!=\eta^2N\beta_j.\vspace{-0.1cm}
\end{equation}
Note that ${\mathbb{E}}\{e^{j\tilde \theta_n}\}=\kappa$, we can derive that
\begin{align}\label{Upsilon}
&{\mathbb{E}}\left\{{\mathop{\rm Re}\nolimits}\{{{\bf{g}}_{B,j}^T{\bf{\Lambda \Theta \Phi }}{{\bf{g}}_{A,i}}h_{i,j}^*}\}\right\}
=\!{\mathop{\rm Re}\nolimits}\{ {{\bf{\bar g}}_{B,j}^T{\bf{\Lambda \Theta {\mathbb{E}}\{\Phi }\}}{{\bf{\bar g}}_{A,i}}\bar h_{i,j}} \}\nonumber\\
=&\eta c_{i,j} \sum\nolimits_{n = 1}^N \!\!\!\!\!\!\!\cos\left({{\theta _n} +{2\pi l_{i,j}^n/\lambda } } \right)\buildrel \Delta \over=\eta c_{i,j}{\Upsilon _{i,j}},
\end{align}
where $c_{i,j}\!=\!\kappa\sigma_{i,j}\sqrt{{{{\tau _{i,j}}{\gamma _{A,i}}{\gamma _{B,j}}{\gamma _{i,j}}}}}/\!\!\sqrt{1 \!+\! {\gamma _{i,j}}}$ and $l_{i,j}^n\!=\!h\left( n \right)$ $ {d_H}( {\sin \phi _i^a\cos \phi _i^e \!+\! \sin \zeta _j^a\cos \zeta _j^e})\!+\!v\left( n \right){d_V}\left( {\sin \phi _i^e\! +\! \sin \zeta _j^e} \right)$.

We can substitute \eqref{eta}, \eqref{E1_result} - \eqref{Upsilon} into \eqref{appro} to complete the proof.
\section{}\label{N_infty}\vspace{-0.1cm}
From \eqref{1}, we can infer that
\begin{equation}
\label{Gamma_max}
 N \!+\! {\tilde \Gamma _{i,i}}\!\le\! {\left( {\sum\nolimits_{n = 1}^N {\left| {{e^{j{\theta _n}}}{{\left[ {{{{\bf{\bar g}}}_{B,i}}} \right]}_n}{{\left[ {{{{\bf{\bar g}}}_{A,i}}} \right]}_n}} \right|} } \right)^2} = {N^2},
\end{equation}
which indicates that the optimal phase shift of each element on the active RIS with ideal hardware is $\theta _n^{opt} =  - \frac{{2\pi }}{\lambda }( {h\left( n \right){d_H}\left( {\sin \phi _i^a{\rm{cos}}\phi _i^e + \sin \zeta _i^a\cos \zeta _i^e} \right) + v\left( n \right){d_V}( {\sin \phi _i^e}}$ ${{ + \sin \zeta _i^e} )} ) + C$, where $C$ is an arbitrary constant. Recalling ${\Omega_{i,i}^*}$ in Remark \ref{onepair}, we have ${\Omega_{i,i}^{
*,opt}}\!=\!\tau_{i,i}(\gamma_{A,i}\gamma_{B,i}N^2+(\gamma_{A,i}+$ $\gamma_{B,i})N+N)$. Substituting ${\Omega _{i,i}^*}$ with ${\Omega _{i,i}^{*,opt}}$ into ${\tilde R_1}$ yields
${{\tilde R_1^{opt}}} $ $\!\!\!\!\!={\log _2}\left({1+ \frac{{{E_u}{P_R}{\tau _{i,i}}( {{\gamma _{A,i}}{\gamma _{B,i}}+  \frac{1}{N}({\gamma _{A,i}}+  {\gamma _{B,i}}) +  \frac{1}{N}} )}}{{{P_R}{\beta _i}\sigma _F^2+  {\frac{E_u}{N}{\alpha _i}\sigma _i^2+  \sigma _F^2} \sigma _i^2}}} \right)$. When $N\to\infty$, we have $\frac{1}{N}\to 0$ and $R_1\to{{\tilde R_1^{opt}}} $. We arrive at the final result after removing the zero terms.\vspace{-0.28cm}
\end{appendices}
\bibliographystyle{IEEEtran}
\bibliography{IEEEabrv,Refer}
\end{document}